\documentclass[11pt]{article}
\usepackage{graphicx}
\usepackage{wrapfig}
\usepackage{amsfonts}
\usepackage{amssymb,epic,eepic}
\usepackage{amsmath}
\usepackage{dcolumn}
\usepackage{bm}
\usepackage{bbm}
\usepackage{verbatim}
\usepackage{color}
\usepackage{amsthm}
\usepackage{tikz}
\usepackage{nameref}
\usetikzlibrary{arrows,positioning,decorations.pathmorphing,decorations.markings,shapes,fadings}
\usepackage[acronym]{glossaries}

\newcommand{\eins}{{\mathbbm{1}}}

\newcommand{\qa}{\Theta}
\newcommand{\Tr}{\mathrm{Tr}}

\newacronym{avc}{AVC}{Arbitrarily Varying Channel}
\newacronym{caavc}{CAAVC}{Correlation-Assisted Arbitrarily Varying Channel}
\newacronym{dmc}{DMC}{Discrete Memoryless Channel}
\newacronym{bpsk}{BPSK}{Binary Phase Shift Keying}
\newacronym{tmsv}{TMSV}{Two-Mode Squeezed Vacuum}
\newacronym{bsc}{BSC}{Binary Symmetric Channel}
\newacronym{cr}{CR}{Common Randomness}
\newacronym{dc}{DC}{Deterministic Code}
\newacronym{cc}{CC}{Correlated Code}
\newacronym{crac}{CRAC}{Common Randomness-Assisted Code}

\newtheorem{theorem}{Theorem}

\newtheorem{definition}{Definition}

\newtheorem{lemma}{Lemma}

\newtheorem{remark}{Remark}

\newcommand{\tr}{\mathrm{tr}}

\begin{document}
	\title{Transmitting Correlation for Data Transmission over the Bosonic Arbitrarily Varying Channel}
	\author{Janis N\"otzel, Florian Seitz}
	\maketitle
\begin{abstract}
	Shared randomness is the central ingredient for stabilizing symmetrizable communication systems against arbitrarily varying jammers. Given the presence of the jammer, however, the question arises how this precious resource could have been distributed. Several works discuss the use of external sources for this task. In this work, we show, based on the most standard optical communication model, how the sender and receiver can employ either classically correlated thermal light or entangled two-mode squeezed states created at and transmitted by the sender to counter the jamming attack of an energy-limited jammer during the distribution phase. Both sender and receiver are only allowed to use  homodyne detection in our model, and the sender has to obey a power limit as well.
\end{abstract}
	
\section{Introduction}
\glspl{avc} were introduced in \cite{bbt} and developed into the standard information-theoretic model for communication systems that are subject to jamming. The question whether the capacity $C$ of an \gls{avc} is positive or not turned out to be nontrivial \cite{ericsonAVC,elimination,csiszarNarayanPositivity}: As soon as an amount of randomness shared between sender and receiver is available that scales polynomially with the blocklength, the capacity $\bar C$ of the system is described by an information-theoretic quantity similar to Shannon's capacity formula. To describe $C$, the notion of \emph{symmetrizability} of an \gls{avc} got established \cite{ericsonAVC,csiszarNarayanPositivity} (see also \cite{ahlswede-blinovsky}). A symmetrizable \gls{avc} has zero capacity, and the capacity formula for a non-symmetrizable one equals that of the same \gls{avc} with shared randomness between sender and receiver. 

The investigation of \glspl{avc} in signal processing was initiated in \cite{csiszarNarayan}, where the received signal was modeled as $y= s+x+z$, with $s$ being the jamming signal, $x$ the transmitted signal, $z$ the additive white Gaussian noise ($s,x,z\in\mathbb C$). The analog of the model \cite{csiszarNarayan} in quantum signal processing is given by coherent states $|s\rangle,|x\rangle$ interfering on a beamsplitter with transmittivity $\eta$, with the received signal $|\sqrt{\eta}s+\sqrt{1-\eta}x+z\rangle$ being disturbed by classical Gaussian noise $z$ \cite{noetzel-isit2024}.

\glspl{avc} display super-activation \cite{minglaiSuperActivation} and discontinuity of $C$ \cite{minglaiDiscontinuity,noetzelDiscontinuity}. The symmetrizability condition is not computable \cite{bocheUncomputable}, but approximate solutions can be efficiently computed \cite{detectingsymmetrizability}. The presence of just any correlated bipartite source that is \emph{independent} from the jammer is sufficient to guarantee operation at $\bar C$ \cite{ahlswede-cai-correlated,boche13}. Their capacity under different types of jamming has been studied for colored Gaussian noise \cite{peregAVC} and native quantum tasks \cite{quantumAVC,fullyQuantumAVC}, with similar observations. In some cases, even exact thresholds for the amount of shared randomness are known \cite{jaggiAVC}. 

These results establish that shared (common) randomness is a critical ingredient for communication under jamming, and already extremely small amounts of shared randomness suffice to achieve the random capacity $\bar C$. 


However, the following core question has not been previously formulated: \emph{Assuming a symmetrizable \gls{avc}, and no preshared correlated randomness, how could sender and receiver start to communicate, how could they e.g. establish a source of shared randomness?}

In this work, we address this highly contradictory situation for the first time. We show that, under the constraint of an energy limit that applies per transmission, correlations arising from either specific classically correlated (non-entangled) quantum states or else pure quantum resources, namely shared entangled states. can be utilized to achieve positive capacity. Most importantly, \emph{the correlated states are created at the sender and shared with the receiver under the attack of the jammer}.

In order to model the situation where no correlated distributions are created and shared by the sender, we let sender and jammer use any displaced thermal state $S_N(\alpha)$ for the transmission, with the constraint $N+|\alpha|^2<E$ at the sender and $N+|\beta|^2<P$ at the jammer. Both signals get mixed at a $50:50$ beam-splitter for convenience, which is a standard quantum-optical communication model \cite{guhaBeamsplitter-1,guhaBeamsplitter-2}. For the case $P\geq E$, we show the channel has zero capacity.\\
Since the ability to create any shared common resource that would be helpful for data transmission by directly sending displaced thermal states $S_N(\alpha)$ with $|\alpha|^2+N\leq P$ would imply $C>0$, any such attempt must fail.\\ 
However, if we augment the transmitter with the ability to generate bipartite states $\rho_{RS}$ for which each reduced state $\rho_R$ satisfies $\rho_R=S_N(\alpha)$ for some $(\alpha,N)$ satisfying $|\alpha|^2+N\leq E\leq P$, the corresponding capacity can be proven to be positive. The class of states $\rho_{RS}$ which is allowed by our constraint includes both \gls{tmsv} states (which are entangled), but also specific two-mode thermal states (which are only classically correlated). In order to use the correlated states for data transmission the sender stores the $S$ system of $\rho_{SR}$ while transmitting the $R$ system. Then, both sender and receiver apply homodyne measurements and use the so-obtained results to create a correlated random variable. Afterwards, they can use this variable to obtain positive capacity by following the ideas outlined in \cite{ahlswede-cai-correlated}. 

\section{System Model and Result}
\paragraph{System Model}
\begin{definition}[{\gls{dc}}]\label{def:dc}
	An $(n,\lambda)$ \gls{dc} $\mathcal C$ allows only $\mathcal S=\{S_N(\alpha):N+|\alpha|^2\leq E\}$ for the sender and $\mathcal J=\{S_N(\beta):N+|\beta|^2\leq P\}$ at the jammer. It consists of a function $f:[M]\to\mathcal S^n$ as well as pairwise disjoint decoding sets $D_m\subset\mathbb R^n$, $m=1,\ldots,M$, such that 
	\begin{align}
		\mathrm{succ}(\mathcal C):=\min_{\mathbf j^n\in\mathcal J^n}\frac{1}{M}\sum_{m=1}^M\int_{D_m}\left[\prod_{i=1}^n\Tr|\hat x\rangle\langle\hat x|\mathcal N\left(S(\mathbf j^n)\otimes  S(f(m)\right)\right]dx^n\geq1-\lambda.
	\end{align}
	A non-negative number $R$ is called achievable with $\glspl{dc}$ if there exists a sequence of $\glspl{dc}$ such that $\liminf_{n\to\infty}\tfrac{1}{n}\log M_n\geq R$.
	The corresponding capacity $C$ is the supremum over all achievable numbers $R$.
\end{definition}
\begin{definition}[{\gls{crac}}]\label{def:crac}
	An $(n,M,\lambda)$ \gls{crac} $\mathcal C$ consists of a set $\{\mathcal C^{\gamma}\}_{\gamma=1}^\Gamma$ of $(n,\lambda^{(\gamma)})$ \glspl{dc}, where $\Gamma\in\mathbb N$. The success probability of a \gls{crac} is defined as 
	\begin{align}
		\mathrm{succ} =\min_{\mathbf j^n\in\mathcal J^n}\frac{1}{M\cdot\Gamma}\sum_{m,\gamma=1}^{M,\Gamma}\int_{D_m^\gamma}\left[\prod_{i=1}^n\Tr|\hat x\rangle\langle\hat x|\mathcal N\left(S(\mathbf j^n)\otimes  S(f^\gamma(m)\right)\right]dx^n\geq1-\lambda.
	\end{align}
	A non-negative number $R$ is called achievable with $\glspl{crac}$ if there exists a sequence of $\glspl{crac}$ such that $\liminf_{n\to\infty}\tfrac{1}{n}\log M_n\geq R$.
	The corresponding capacity $\bar C$ is the supremum over all numbers $R$ that are achievable wiht \glspl{crac}.
\end{definition}
\begin{remark}
	The numbers $\lambda^{(n)})$ in Definition \ref{def:crac} are mentioned for the sake of compliance with Definition \ref{def:dc}, but are themselves not relevant to Definition \ref{def:crac}.
\end{remark}
\begin{definition}[{\glspl{cc}}]
	An $(n,M,\lambda)$ \gls{cc} $\mathcal C$ uses an extended sender alphabet $\mathcal Q=\{\rho_{SR}:\rho_R\in\mathcal S\}$, where again $\mathcal S=\{S_N(\alpha):N+|\alpha|^2\leq E\}$. Further, it allows the sender to apply homodyne measurements on the $S$ part of the system, and change code parameters on the system in reaction to measurement results. A (causal) code is thus described by a set of functions $f_{i}:\mathbb [M]\times R^{i}\to\mathcal Q$ where $i=0,\ldots,n-1$, together with pairwise disjoint decoding sets $D_m\subset \mathbb R^n$, $m=1,\ldots,M$. The success probability of a \gls{cc} is defined as 
	\begin{align}
		\mathrm{succ} =\min_{\mathbf j^n\in\mathcal J^n} \frac{1}{M}\sum_{m=1}^M\int_{D_m}\int_{r^n}\prod_{i=1}^np(x_i,r_i|m,r^{i-1})dr^ndx^n
	\end{align}
	where the conditional probability densities are defined as
	\begin{align}
		p(x_i,r_i|m,r^{i-1}) = \Tr(H_{x_i}\mathcal N([(\mathbbm1\otimes H_{r_i})S(f(m,r^{i-1}))(\mathbbm1\otimes H_{r_i})]\otimes S(\mathbf j_i))).
	\end{align}
	A non-negative number $R$ is called achievable with $\glspl{cc}$ if there exists a sequence of $\glspl{cc}$ such that $\liminf_{n\to\infty}\tfrac{1}{n}\log M_n\geq R$.
	The corresponding capacity $C_Q$ is the supremum over all achievable numbers $R$.
\end{definition}
\begin{remark}
	Each function $f_i^{(m)}$ lets the sender, based on the previous $i$ measurement results, decide which state to create next in order to transmit the message $m$.
\end{remark}
\begin{remark}
	One prominent extension of the states $S_N(\alpha)$ are the \gls{tmsv} states
	\begin{align}\label{def:tmsv}
		|\gls{tmsv}\rangle =\cosh(2r)^{-1/2}\sum_{n=0}^\infty\tanh(r)^n|n\rangle_A|n\rangle_B,
	\end{align}
	where $r$ is the squeezing parameter and $\tr_B(|\gls{tmsv}\rangle\langle\gls{tmsv}|)=S_{\cosh(r)^2-1}(0)$. These states are entangled. A second prominent example is a thermal state $S_{2N}$ that got mixed on a $50:50$ beam-splitter, so that the resulting state is 
	\begin{align}\label{def:S2N}
		S_N^{(2)}=\tfrac{1}{\pi\cdot 2N} \int \exp(-|\alpha|^2/2N)|\tfrac{\alpha}{\sqrt{2}}\rangle\langle\tfrac{\alpha}{\sqrt{2}}|\otimes|\tfrac{\alpha}{\sqrt{2}}\rangle\langle\tfrac{\alpha}{\sqrt{2}}|d\alpha.
	\end{align}
	Again, the marginal of this state equals $S_N(0)$. This second class of states is \emph{classically} correlated, but the two modes $A$ and $B$ are not entangled.
\end{remark}
Based on this definition we can state our main theorem.

\begin{theorem}\label{thm:main}
	The following are true:
	\begin{enumerate}
		\item\label{C=0} If $P\geq E$ then $C=0$
		\item\label{c>0} If $E<P$ then $C>0$
		\item\label{barC>c} For all $E,P>0$ we have $\bar C > 0$
		\item\label{C_Q>0} For all $E,P>0$ we have $C_Q>0$.
	\end{enumerate}  
\end{theorem}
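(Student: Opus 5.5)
Throughout, the starting point is an explicit description of the single-letter channel. For sender input $S_N(\alpha)$ and jammer input $S_M(\beta)$, the $50{:}50$ beam-splitter followed by homodyne detection of one quadrature produces a real Gaussian random variable. Its mean is proportional to $\mathrm{Re}(\alpha+\beta)/\sqrt2$, and its variance is $\tfrac14+\tfrac{N+M}{4}$ up to the chosen quadrature convention. The key structural fact is that this law depends on $(\alpha,N)$ and $(\beta,M)$ only through $\alpha+\beta$ and $N+M$, so it is \emph{symmetric} under exchanging sender and jammer inputs. All four items then reduce to statements about a real additive Gaussian AVC with amplitude-limited sender and jammer.

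\textbf{Item \ref{C=0} ($P\geq E$).} Here every sender letter is also an admissible jammer letter. I would run Ericson's symmetrization argument directly on a given $(n,\lambda)$ \gls{dc} with $M\ge2$. The jammer picks $m'$ uniformly and sends $S(f(m'))$. By the exchange symmetry, the pair (true message $m$, jammer message $m'$) yields the same output law as the pair $(m',m)$. Averaging over $m,m'$ then gives average error at least $\tfrac12(1-\tfrac1M)$. Since a \gls{dc} uses a minimum over deterministic $\mathbf j^n$, I would note that some fixed choice of $m'$ is at least as bad as the average. This forces $\lambda\ge\tfrac14$ for all codes with $M\ge2$, hence $C=0$.

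\textbf{Item \ref{c>0}.} I read this item as $E>P$, since otherwise it contradicts item \ref{C=0}. I would restrict the sender to the binary alphabet $S_0(\pm\sqrt E)$, which gives a discrete-input AVC with continuous, compact state set $\mathcal J$. I would then show non-symmetrizability in the Csisz\'ar--Narayan sense. Suppose some kernel $U(\cdot|a)$ on $\mathcal J$ symmetrized the channel. Comparing output means gives $a+\mathbb E_{U(\cdot|b)}[\mathrm{Re}\,\beta]=b+\mathbb E_{U(\cdot|a)}[\mathrm{Re}\,\beta]$, so $2\sqrt E\le 2\sqrt P$, which is impossible. The thermal part of the jammer only adds Gaussian noise, and since Gaussians are determined by their mean, this mean comparison is legitimate. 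Positivity then follows from the Csisz\'ar--Narayan/Ahlswede dichotomy, extended to compact state sets via finite nets and continuity of the Gaussian kernel in $(\beta,M)$. I expect that extension to be routine.

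\textbf{Item \ref{barC>c}.} With common randomness, $\bar C$ equals $\max_p\inf_{q}I(p,W_q)$, where $W_q$ is the channel averaged over jammer distributions $q$ on $\mathcal J$. Take $p$ uniform on $\pm\sqrt E$. For every $q$, the output laws for $\pm\sqrt E$ are a common jammer mixture convolved with Gaussians shifted by $\pm\sqrt E$. These are distinct, since convolution with the Gaussian is injective (its Fourier transform never vanishes), so $I>0$. Lower semicontinuity of $I$ in $q$ together with weak compactness of the set of $q$ supported on the energy ball then gives $\inf_q I>0$. I would spend some care here, because compactness needs tightness and the amplitude bound supplies it.

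\textbf{Item \ref{C_Q>0}.} This is the main obstacle. I would use $S_N^{(2)}$ or the \gls{tmsv} with marginal energy $\le E$. The sender homodynes $S$ and obtains $r_i$; the receiver obtains $x_i=c\,r_i+(\text{jammer shift})+\text{Gaussian noise}$, where $r_i$ is i.i.d.\ Gaussian and independent of the jammer's (causally fixed) choices. The sender thus holds secret randomness that is physically imprinted on the receiver's output. My plan is to treat the first phase as generating a correlated source whose correlation is bounded below uniformly over jammer strategies, to quantize it to signs, and then to mimic Ahlswede--Cai \cite{ahlswede-cai-correlated}. That means using the source to index a random code as in item \ref{barC>c}, with adaptivity letting the sender select codewords as a function of $r^{k}$. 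The difficulty is that the receiver's side of the source is jammer-influenced, whereas Ahlswede--Cai assume a jammer-independent source. I would first try to show directly that the combined channel $(m,r)\mapsto x$, with $r$ unknown to the jammer, is non-symmetrizable. The jammer cannot reproduce the sender's fresh Gaussian randomness $r$, so a mean/variance comparison analogous to the one in item \ref{c>0}, applied to the correlation $\mathbb E[r x]$, should rule out symmetrization regardless of how $E$ and $P$ compare.
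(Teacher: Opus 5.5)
\textbf{Items 1--3.} Item 1 is the paper's argument: the jammer replays codewords, the $50{:}50$ beam-splitter makes sender and jammer interchangeable, and the error is at least $\tfrac12-\tfrac1{2M}$. You are right to read item 2 as $E>P$, which is what the paper's proof uses. In items 2 and 3 you take different routes.
\begin{itemize}
\item In item 2 you use first-moment non-symmetrizability plus an extension of Csisz\'ar--Narayan to continuous outputs and a compact state set.
\item In item 3 you get positivity of $\inf_q I(p,W_q)$ from weak compactness and lower semicontinuity.
\end{itemize}
The paper instead thresholds the homodyne output at $0$ immediately. It then derives uniform margins: $w_{\mathbf j}(x|x)\ge\tfrac12+\nu(E,P)$ in item 2, and a BSC with crossover at most $\tfrac12-\epsilon(E,P)$ after a shared random sign flip in item 3. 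Finally it applies the binary-output theorem of Ahlswede--Wolfowitz, which tolerates arbitrary state sets. That gives explicit bounds and makes the extension you call routine unnecessary. Your arguments are sound in outline. One correction in item 3: the two output laws differ because no probability law is invariant under a nonzero shift, not because of injectivity of Gaussian convolution by itself.

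\textbf{Item 4.} This is the genuine gap: you give a plan rather than a proof, and the step meant to carry it does not work as stated. The sender's outcome $r$ is produced by the measurement, not chosen, and the receiver never sees it. So non-symmetrizability of $(m,r)\mapsto x$ is not the hypothesis of any coding theorem you can invoke: Csisz\'ar--Narayan needs sender-chosen letters, and Ahlswede--Cai need a jammer-independent source. Nor would a moment comparison give a usable margin. Given $r$, the $R$ mode shifts the receiver's quadrature by $c\,r$. A coherent jamming state shifted by $c\,r'$, with $r'$ drawn from the same law, then produces an output law that depends only on $r+r'$, since the variances enter only through their sum. This imitation fails only where $|c\,r'|$ exceeds the peak constraint, i.e.\ on Gaussian tails that the sender neither controls nor can announce. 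Nothing uniform in $P\ge E$ comes out of that.

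The paper's route is concrete. In $n$ uses the sender transmits the $R$ halves of two-mode squeezed vacuum states, and both parties sign-quantize their homodyne outcomes. Lemma~\ref{lem:lipschitz} together with the Plackett bound of Lemma~\ref{lem:placket-bound} gives $q=\lambda_cq_c+\lambda_{-1}q_{-1}+\lambda_1q_1$ with $\lambda_c\ge\delta(E,P)$ for every jamming signal. In the next $n$ uses the bits serve as one-time pads on BPSK, and Ahlswede--Wolfowitz is applied per letter.

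\textbf{The paper's reduction and your obstacle.} The difficulty you isolated is exactly the step the paper's reduction passes over. Its effective-channel computation maps the components $q_{\pm1}=\delta_{\pm1}\otimes\pi$ to $b_{1/2}$. It does so by taking the deterministic coordinate to be the sender's pad $x'$ and the uniform one to be the receiver's $y'$.

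Physically it is the other way round. The first moments after the beam-splitter show that mode 1, which the receiver homodynes, carries the displacement $\beta$, while mode 3 does not. The paper's own formulas accordingly give the receiver's bit the jammer-dependent marginal $s=\phi(-\beta/\sqrt a)$, while the sender's bit is exactly uniform. With the receiver's pad fixed and the sender's uniform, the leftover channel is input-independent, but its output bias is set by the phase-2 jamming signal; it is not $b_{1/2}$.

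Take $P$ large compared with $1+E$, and pair a phase-1 shift of $\pm\sqrt P$ at time $i$ with a phase-2 shift of $+\sqrt P$ at time $n+i$. The decoded bit is then, with probability close to $1$, a constant whose value the jammer selects by the sign of the phase-1 shift, whatever was sent. Consequently:
\begin{itemize}
\item the ranges of $\tilde w_i(1|1)$ and $\tilde w_i(1|-1)$ over jamming choices overlap;
\item the per-letter family is symmetrizable;
\item no deterministic code built on this reduction has positive rate.
\end{itemize}
A complete proof of item 4 must neutralize the jammer's control over the receiver's marginal. Neither your plan nor the paper's reduction does that.
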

\begin{remark}
	We prove property \ref{C_Q>0} based on the correlations in a \gls{tmsv} state \eqref{def:tmsv}. We note however that the exact same strategy of proof works for a state of the form $S^{(2)}_N$ as in \eqref{def:S2N}. Thus, \emph{classical} correlations are fully sufficient to show $C_Q>0$, although different members of $\mathcal Q$ will obviously lead to different lower bounds on $C_Q$. 
\end{remark}
\begin{remark}
	Since the exact point in time where the sender-side homodyne measurement is made does not get clarified by our definition of \gls{cc}, a violation of the peak power constraint could occur in principle if the sender-side measurement is carried out prior to transmission. Since on the other hand side the ability to create high energy states reliably and on purpose differs from the ability to benefit from their sporadic occurrence, our results may nonetheless shed some light on the potential use of correlations in data transmission.
\end{remark}
\begin{proof}[Proof of property \eqref{C=0}:]
	We follow the arguments in \cite{csiszarNarayan}. Assume $P\geq E$. For every $(n,\lambda)$ \gls{dc} $\mathcal C$ with signals $\mathbf s^n=(\alpha^n_m,N_m^n)$ we consider corresponding jamming sequences $\mathbf j^n=\mathbf s^n$. Then for every decoder we have 
	\begin{align}
		\mathrm{err}(\mathcal C)
		&=\max_{\mathbf j^n}\frac{1}{M}\sum_m\Tr((\mathbbm1-D_m)\mathcal N(\mathbf j^n,\mathbf s^n_m))\\
		&\geq\frac{1}{M^2}\sum_{l,m=1}^M\Tr((\mathbbm1-D_m)\mathcal N(\mathbf j^n_l,\mathbf s^n_m))\\
		&=\frac{1}{M^2}\sum_{l,m=1}^M\Tr((\mathbbm1-D_m)\mathcal N(\mathbf s^n_m,\mathbf j^n_l)),
	\end{align}
	where we switched from a maximum- to an average jamming error and then used symmetry of the $50:50$ beam-splitter. We proceed by replacing each term $\eins-D_m$ with $D_l$ for an $l\neq m$:
	\begin{align}
		\mathrm{err}(\mathcal C)&\geq\frac{1}{M^2}\sum_{l\neq m=1}^M\Tr(D_l\mathcal N(\mathbf s^n_m,\mathbf j^n_l)) + \frac{1}{M^2}\sum_{l=1}^M\Tr((\eins-D_l)\mathcal N(\mathbf s^n_l,\mathbf j^n_l))\\
		&\geq \frac{1}{M^2}\sum_{l\neq m=1}^M\Tr(D_l\mathcal N(\mathbf s^n_m,\mathbf j^n_l))\\
		&= \frac{1}{M^2}\sum_{l, m=1}^M\Tr(D_l\mathcal N(\mathbf s^n_m,\mathbf j^n_l))-\frac{1}{M^2}\sum_{l=1}^M\Tr(D_l\mathcal N(\mathbf s^n_l,\mathbf j^n_l))\\
		&\geq \min_{\mathbf s^n\in\mathcal J^n}\frac{1}{M}\sum_{l=1}^M\Tr(D_l\mathcal N(\mathbf s^n,\mathbf j^n_l))-\frac{1}{M}\\
		&=1-\mathrm{err}(\mathcal C)-\tfrac{1}{M},
	\end{align}
	which implies (for all $M\geq2$)
	\begin{align}
		\mathrm{err}(\mathcal C)\geq\tfrac{1}{2}-\tfrac{1}{2M}\geq\tfrac{1}{4}.
	\end{align}
	Therefore $P\geq E$ implies that $\mathrm{err}(\mathcal C_n)\to0$ can only be satisfied if $|M_n|\to1$ asymptotically, which means not even a single bit can be transmitted $C=0$ and therefore \eqref{C=0} is proven. 
	\end{proof}
	\begin{proof}[Proof of property \eqref{c>0}] Assume the transmitter sends a coherent state $|\alpha\rangle$ and the jammer a thermal state $S_N(\beta)$. These are represented by first and second moments 
	\begin{align}
		(\alpha,\tfrac{1}{2}\mathbbm1),\qquad (\beta,(N+\tfrac{1}{2})\mathbbm{1}).
	\end{align}
	Upon mixing the two at a $50:50$ beam-splitter and tracing out the environment, we obtain for first and second moments of the state at the receiver 
	\begin{align}
		\left( (\alpha+\mathbbm{i}\beta)/\sqrt{2},(N+1)\mathbbm1\right).
	\end{align}
	If the receiver now uses a homodyne detector, the probability of measuring result $x\in\mathbb R$ is without loss of generality given by 
	\begin{align}
		p(x) = \tfrac{1}{\sqrt{\pi(N+1)}}\exp(-\tfrac{(x-(\alpha-\beta))^2}{N+1}),
	\end{align}
	for real numbers $(\beta,N)$ satisfying $|\beta|^2+N\leq P$. Upon setting $\alpha=\sqrt{E}$, this constraint ensures that 
	\begin{align}
		\alpha-\beta &= \sqrt{E} - \beta\\
		&\geq \sqrt{E}-\sqrt{P}\\
		&>0.
	\end{align}
	It follows that, independent of the choice of Gaussian state made by the jammer, 
	\begin{align}
		\mathbb P(x>0|\alpha,\mathbf j)&=\tfrac{1}{2}(1+\mathrm{erf}\big(\tfrac{\alpha-\beta}{\sqrt{N+1}}\big))\\
		&\geq\tfrac{1}{2}(1+\min_{N'\leq P}\mathrm{erf}\big(\tfrac{\sqrt{E}-\sqrt{P-N'}}{\sqrt{N'+1}}\big))\\
		&=\tfrac{1}{2}(1+\mathrm{erf}\big(\min_{N'\leq P}\tfrac{\sqrt{E}-\sqrt{P-N'}}{\sqrt{N'+1}}\big))\\
		&\geq\tfrac{1}{2}(1+\mathrm{erf}\big(\tfrac{\sqrt{E}-\sqrt{P}}{\sqrt{P+1}}\big))\\
		&\geq\tfrac{1}{2}+\tfrac{1}{4}\mathrm{erf}\big(\tfrac{\sqrt{E}-\sqrt{P}}{\sqrt{P+1}}\big).
	\end{align}
	The same reasoning applies to $p(x<0|-\alpha,\mathbf j)$ if the sender transmits $-\alpha$. Over $n$ consecutive uses of the channel, the sender and receiver transmit effectively over a series of binary channels with transition probabilities $w_{\mathbf j_1}(y|x),\ldots,w_{\mathbf j_n}(y|x)$ ($x,y\in\{-1,1\}$) with the property 
	\begin{align}
		w_{\mathbf j_i}(x|x)\geq\tfrac{1}{2}+\nu(E,P)\qquad x\in\{-1,1\}
	\end{align}
	where $\nu(E,P):=\tfrac{1}{4}\mathrm{erf}\big((\sqrt{E}-\sqrt{P})/\sqrt{P+1}\big)$ holding uniformly over all $i=1,\ldots,n$. Since both in- and output system of this channel are binary, Theorem 1 of \cite{ahlswedeWolfowitz} applies and the capacity of $\mathcal N$ is lower bounded as	\begin{align}
		\bar C \geq \max_p\min_{v\in\Xi} I(p;v)
	\end{align}
	where $I$ is the mutual information and $\Xi$ defined as the convex hull of all those channels which are induced by legitimate jammer attacks: 
	\begin{align}
		\Xi:=\mathrm{conv}(\{w_{(\beta,N)}\}_{|\beta|^2+N\leq P}).
	\end{align}
	By assuming a symmetric distribution $\pi(-1)=\pi(1)$ of the signals $\alpha$ and $\-\alpha$ we obtain via Pinsker's inequality and Lemma \ref{lem:lipschitz}
	\begin{align}
		\bar C &\geq \min_{v\in\Xi}I(\pi,v)\\
			&\geq \min_{v\in\Xi}\tfrac{1}{2}\| v_\pi-\pi\otimes v(\pi)\|_1^2\\
			&\geq\min_{v\in\Xi}\tfrac{1}{2}|v(-1|-1)v(1|1)-v(-1|1)v(1|-1)|\\
			&\geq\min_{v\in\Xi}\tfrac{1}{2}|v(-1|-1)v(1|1)-(1-v(1|1))(1-v(-1|-1))|\\
			&=\min_{v\in\Xi}\tfrac{1}{2}|1 - v(-1|-1) - v(1|1)|\\
			&=\min_{v\in\Xi}\tfrac{1}{2}(v(-1|-1) + v(1|1) -1)\\
			&\geq \nu(E,P)\\
			&>0.
	\end{align}
	\end{proof}
	\begin{proof}[Proof of property \eqref{barC>c}] We approach the problem by assuming an amount of $n$ bits of \gls{cr} is available in $n$ transmissions. In this case, sender and receiver can use a permutation of the sender- and receiver symbol, which can be realized by the unitary $U=\exp(-\mathbbm{i}\hat a^\dagger\hat a)$. Then, for every input state $\alpha$ of the sender and jammer state $\sigma=(\beta,(N+\tfrac{1}{2})\mathbbm1)$ the rotated state is $\sigma'=(-\beta,(N+\tfrac{1}{2})\mathbbm1)$ and 
	\begin{align}
		\rho = \tfrac{1}{2}(\sigma+\sigma')
	\end{align}
	describes the effective attack of the jammer: 
	\begin{align}
		\mathcal N'(\alpha,\sigma) &= \tfrac{1}{2}(\mathcal N'(\alpha,\sigma) + \mathcal N'(-\alpha,\sigma))\\
		&= \mathcal N'(\alpha,\rho).
	\end{align}
	Upon applying a homodyne measurement at the receiver, we get (by linearity) for every signal $\mathbf s =(\alpha,0)$ and jamming signal $\mathbf j=(\beta,N)$
	\begin{align}\label{eqn:symmetric-p}
		p(x|\mathbf s,\mathbf j) = \tfrac{1}{\sqrt{\pi(N+1)}}\exp(-\tfrac{(x+\alpha)^2+\beta^2}{N+1})\cosh(\tfrac{2\beta(x+\alpha)}{N+1}).
	\end{align}
	This Gaussian mixture is shifted by the parameter $\alpha$ and independent of the sign of $\beta$. For this distribution, and with $\mathbf s=(\sqrt{E},0)$ and arbitrary $\mathbf j=(\beta,N)$ it holds
	\begin{align}
		\mathbb P(X>0|\mathbf s,\mathbf j)&=\left(2+\mathrm{erf}(\tfrac{\sqrt{E}+ \beta}{\sqrt{1+N}})+\mathrm{erf}(\tfrac{\sqrt{E}-\beta}{\sqrt{1+N}})\right)/4\\
		&=\left(2+\tfrac{2}{\sqrt{\pi}}\int_{0}^{a+b}e^{-t^2}dt + \int_{0}^{a-b}e^{-t^2}dt\right)/4\\
		&=\left(2+\tfrac{2}{\sqrt{\pi}}\int_{b-a}^{b+a}e^{-t^2}dt \right)/4\\
		&\geq\left(2+\tfrac{2}{\sqrt{\pi}}\int_{b-a}^{b+a}e^{-(b+a)^2}dt \right)/4\\
		&=\left(2+\tfrac{4a}{\sqrt{\pi}}e^{-(b+a)^2}\right)/4
	\end{align}
	where we used $a:=\sqrt{E/(1+N)}$ and $b:=\beta/\sqrt{1+N}$. In order to derive a lower bound in terms of the numbers $E,P$ we switch back to using those terms:
	\begin{align}
		\min_{\mathbf j\in\mathcal J}\mathbb P(X>0|\mathbf s,\mathbf j)&=\left(2+\tfrac{4\sqrt{E}}{\sqrt{\pi(1+N)}}\exp(-\big(\tfrac{\sqrt{E}-\beta}{\sqrt{1+N}}\big)^2)\right)/4\\
		&\geq\left(2+\tfrac{4\sqrt{E}}{\sqrt{\pi(1+P)}}\exp(-(\sqrt{E}-\sqrt{P})^2)\right)/4.
	\end{align}
	Here we maximized the term in front of the exponential separately from the exponential itself, and used the definition of $\mathcal J$. Upon using 
	\begin{align}\label{def:epsilon(E,P)}
		\epsilon(E,P):=\tfrac{\sqrt{E}}{\sqrt{\pi(1+P)}}\exp(-(\sqrt{E}-\sqrt{P})^2)
	\end{align}
	and noting that in the case the transmitted signal was $\mathbf s'=(-\sqrt{E},0)$ the same estimate 
	\begin{align}
		\min_{\mathbf j\in\mathcal J}\mathbb P(X<0|\mathbf s',\mathbf j)\geq\tfrac{1}{2}+\epsilon(E,P)
	\end{align} 
	holds for the other signal. In addition, the distribution \eqref{eqn:symmetric-p} is \emph{symmetric} with respect to joint reflection of $x$ and the first moment of $\mathbf j$ at the origin:
	\begin{align}
		p(x|(\alpha,0),\mathbf j) &= \tfrac{1}{\sqrt{\pi(N+1)}}\exp(-\tfrac{(x+\alpha)^2+\beta^2}{N+1})\cosh(\tfrac{2\beta(x+\alpha)}{N+1})\\
		&=\tfrac{1}{\sqrt{\pi(N+1)}}\exp(-\tfrac{(-x-\alpha)^2+\beta^2}{N+1})\cosh(\tfrac{2\beta(-x-\alpha)}{N+1})\\
		&=p(-x|(-\alpha,0),\mathbf j).
	\end{align}  
	We thus proceed to applying a threshold decoder deciding that $b=0$ was sent if the measured value $x$ satisfies $x>0$ has probability of receiving the bit correctly strictly greater than (although, depending on the exact values of $P$ and $E$ very close to) $1/2$. The same estimate shows that the probability of decoding $b=1$ correctly is strictly larger than $1/2$. By the symmetry property of $p(x|\mathbf s,\mathbf j)$ we know that, for every choice $\mathbf j$ of the jammer, the resulting channel from $b$ to the value decoded by the threshold decoder is a binary symmetric channel. 
	Since by \cite[Theorem 1]{ahlswedeWolfowitz} the capacity of an arbitrarily varying binary symmetric channel is lower bounded by $1-h(\epsilon)$ where $\epsilon$ is the smallest allowed bit-flip probability, we have shown that 
	\begin{align}
		\bar C&\geq 1-h(\tfrac{1}{2}+\epsilon(E,P))>0.
	\end{align}
	\end{proof}
	\begin{proof}[Proof of property \eqref{C_Q>0}] We employ a \gls{tmsv} state parameterized by its squeezing parameter $r$, which has first moments equal to zero and second moments equal to 
	\begin{align}
		V_S=&\tfrac{1}{2}\left(\begin{array}{ll}\cosh(2r)\mathbbm{1}&\sinh(2r)\mathbbm{Z}\\
			\sinh(2r)\mathbbm{Z}&\cosh(2r)\mathbbm{1}\end{array}\right)\\
		&\mathbbm{Z}=\left(\begin{array}{ll}1&0\\
			0&-1\end{array}\right),\qquad \mathbbm{1}=\left(\begin{array}{ll}1&0\\
			0&1\end{array}\right).
	\end{align}
	The joint system describing the jamming attack and the \gls{tmsv} state is $S_N(\beta)\otimes|\gls{tmsv}\rangle\langle\gls{tmsv}|$, with first- and second moments given by 
	\begin{align}
		b\oplus0\oplus0,\ \ \left(\begin{array}{lll}
			(N+\tfrac{1}{2})\mathbbm1&   0 &    0\\
			0&\cosh(2r)\mathbbm{1}&\sinh(2r)\mathbbm{Z}\\
			0&\sinh(2r)\mathbbm{Z}&\cosh(2r)\mathbbm{1}\end{array}\right),
	\end{align}
	where $b=\sqrt{2}(\Re(\beta),\Im(\beta))^T$. The $50:50$ beam-splitter acting on the first two blocks is modeled as $B\oplus\mathbbm1$ \cite{serafiniBOOK}, where 
	\begin{align}
		B=\frac{1}{\sqrt{2}}\left(\begin{array}{ll}\mathbbm1&\mathbbm1\\ -\mathbbm1&\mathbbm1\end{array}\right).
	\end{align}
	It changes the first moments to $(\Re(\beta),\Im(\beta),-\Re(\beta),\Im(\beta),0,0)^T$ and the joint covariance matrix to 
	\begin{align}
		\left(\begin{array}{lll}
			(A+B)\cdot\mathbbm1  &   (A-B)\cdot\mathbbm1  &    C\cdot\mathbbm Z\\
			(A-B)\cdot\mathbbm1  & (A+B)\cdot\mathbbm1     & - C\cdot\mathbbm Z\\
			C\cdot\mathbb Z&-C\cdot\mathbb Z& 2B\cdot\mathbbm 1\end{array}\right),
	\end{align}
	where $A=(N+\tfrac{1}{2})/2$, $B=\cosh(2r)/4$ and $C=\sinh(2r)/4$. Upon homodyning the $\hat x$ quadrature of first and third mode, the distribution of measurement results $(x_1,x_3)$ is Gaussian \cite[Eq. (5.128)]{serafiniBOOK} with covariance- and inverse covariance matrix given by
	\begin{align}
		V_G=\left(\begin{array}{ll}
			A+B&   C\\
			C&2B\end{array}\right),\qquad V_G^{-1}=\frac{1}{\mathrm{det}(V_G)}\left(\begin{array}{ll}
			2B&   -C\\
			-C&A+B\end{array}\right).
	\end{align}
	The correlation of $(X_1,X_3)$ is given by 
	\begin{align}
		\mathrm{Corr}(X_1,X_3) &:= \frac{\mathrm{Cov}(X_1,X_3)}{\sqrt{\mathrm{Var}(X_1)\mathrm{Var}(X_3)}}\\
		&= \frac{C}{\sqrt{(A+B)2B}}
	\end{align}
	and the mean by $\Re(\beta)$, which for simplicity and without loss of generality we assume to be equal to $\beta$ itself. Upon defining new random variables 
	\begin{align}\label{def:binary-correlations}
		(b_A,b_B)=\mathrm{dec}(X_1,X_3)
	\end{align}
	where $\mathrm{dec}(x_1,x_3)=((-1)^{\mathrm{sgn}(x_1)},(-1)^{\mathrm{sgn}(x_3)})$ the probabilities of the binary detection events $(b_A,b_B)\in\{-1,1\}^2$ translate into probabilities 
	\begin{align}
		\mathbb P(\Box_{\cdot}),
	\end{align}
	where $\Box_{\nearrow}$, $\Box_{\nwarrow}$,  $\Box_{\searrow}$ and $\Box_{\swarrow}$ denote the top right, top left, bottom right and bottom left quadrant of the complex plane $\mathbb C$. These probabilities have a specific geometric structure, which we carve out by using the bi-variate normal cumulative distribution function 
	\begin{align}\label{def:bivariate-normal-cumulative-distribution-function}
		\Phi_2(h,k,\rho):=\int_{-\infty}^h\int_{-\infty}^k\left[\exp\left(-\frac{x^2-2\rho xy + y^2}{2(1-\rho^2)}\right)/2\pi\sqrt{1-\rho^2}\right]dxdy.
	\end{align}
	Upon defining new random variables $X_A=(X_1-\beta)/\sqrt{A+B}$ and $X_B=X_3/\sqrt{2B}$ we obtain
	\begin{align}\label{psw}
		\mathbb P(\Box_{\swarrow}) &= \int_{-\infty}^0\int_{-\infty}^0\frac{1}{2\pi\sqrt{\mathrm{det}(V_G)}}\exp\left(-\frac{2B(x_1-\beta)^2-2C(x_1-\beta)x_3+(A+B)x_3^2}{2\mathrm{det}(V_G)}\right)dx_1dx_3\\
		&=\int_{-\infty}^0\int_{-\infty}^{-\frac{\beta}{\sqrt{A+B}}}\frac{\sqrt{2B(A+B)}}{2\pi\sqrt{\mathrm{det}(V_G)}}\exp\left(-\frac{(x_A^2+x_B)^2-x_Ax_B2C/\sqrt{2B(A+B)}}{2\mathrm{det}(V_G)/(2B(A+B))}\right)dx_Adx_B
	\end{align}
	where $\mathrm{det}(V_G)=(A+B)2B-C^2$. Upon setting 
	\begin{align}\label{def:rho}
		\rho=C/\sqrt{2B(A+B)}
	\end{align}
	we obtain 
	\begin{align}
		\sqrt{1-\rho^2} &= \sqrt{1-C^2/(2B(A+B))}\\
				&= \frac{1}{\sqrt{2B(A+B)}}\sqrt{2B(A+B)-C^2}\\
				&=\frac{1}{\sqrt{2B(A+B)}}\sqrt{\mathrm{det}(V_G)}
	\end{align}
	and can therefore simplify the above integral to 
	\begin{align}
		\mathbb P(\Box_{\swarrow})&=\int_{-\infty}^0\int_{-\infty}^{-\frac{\beta}{\sqrt{A+B}}}\frac{1}{2\pi\sqrt{1-\rho^2}}\exp\left(-\frac{(x_A^2+x_B)^2-x_Ax_B2\rho}{2(1-\rho^2)}\right)dx_Adx_B\label{eqn:density}\\
		&=\Phi_2(-\tfrac{\beta}{\sqrt{a}},0,\rho).
	\end{align}
	By the same methodology, and since each marginal of the density in \eqref{eqn:density} is a mean zero normal distribution with variance one, it further holds 
	\begin{align}\label{pnw}
		\mathbb P(\Box_{\nwarrow}) &= \mathbb P(X_B<0) - \mathbb P(\Box_{\swarrow})\\
		&=\phi(-\tfrac{\beta}{\sqrt{a}}) - \Phi_2(-\tfrac{\beta}{\sqrt{a}},0,\tfrac{C}{\sqrt{2(A+B)B}})\\
		&=\phi(-\tfrac{\beta}{\sqrt{a}}) - \mathbb P(\Box_{\swarrow}),
	\end{align}
	where 
	\begin{align}\label{def:phi}
		\phi(x):=(2\pi)^{-1/2}\int_{-\infty}^x\exp(-t^2/2)dt
	\end{align}
	is the standard normal cumulative distribution function. Further we obtain 
	\begin{align}\label{pse}
		\mathbb P(\Box_{\searrow}) &= \mathbb P(X_A<0) - \mathbb P(\Box_{\swarrow})\\
		&=\phi(0) - \Phi_2(-\tfrac{\beta}{\sqrt{a}},0,\tfrac{C}{\sqrt{2(A+B)B}})\\
		&=\tfrac{1}{2} - \mathbb P(\Box_{\swarrow})
	\end{align}
	and, due to normalization, 
	\begin{align}
		\mathbb P(\Box_{\nearrow}) &= 1 - \mathbb P(\Box_{\swarrow}) - \mathbb P(\Box_{\nwarrow}) - \mathbb P(\Box_{\searrow})\\
		&= 1 - \mathbb P(\Box_{\swarrow}) - \mathbb P(\Box_{\nwarrow}) - \mathbb P(\Box_{\searrow})\\
		&= 1 - \phi(-\tfrac{\beta}{\sqrt{a}}) - \mathbb P(\Box_{\searrow})\\
		&= \tfrac{1}{2} + \mathbb P(\Box_{\swarrow}) - \phi(-\tfrac{\beta}{\sqrt{a}}).
	\end{align}
	It is however the case that 
	\begin{align}
		\Phi_2(-x,0,\rho)&=\mathbb P(X_A<-x,X_B<0)\label{eqn:phi-to-P}\\
		&\leq\mathbb P(X_B<0)\\
		&\leq\tfrac{1}{2}.
	\end{align}
	Thus for $|\rho|\neq1$ we have $\mathbb P(\Box_{\searrow})<\tfrac{1}{2}$. In addition, it is clear from \eqref{eqn:phi-to-P} that $x\to\Phi_2(-x,0,\rho)$ is continuous and monotone increasing. Thus for any $P\geq0$ it follows that for all $|\rho|<1$ there exists a $\delta>0$ such that $\Phi_2(-x,0,\rho)<\tfrac{1}{2}-\delta$. Therefore it is reasonable to assume that, for every jamming signal $\mathbf j\in\mathcal J$, the distribution $1$ remains correlated. The exact geometric properties of the measurement results at sender and receiver will be discussed in the following paragraph ``\nameref{para:geometry}'', and the implications for the channel in paragraph ``\nameref{para:effectiveChannel}''.
	
	\paragraph{Geometric Properties}\label{para:geometry}
	We define $\qa$ to be the set of all distributions of $(B_A,B_B)=\mathrm{dec}(X_A,X_B)$ as defined in \eqref{def:binary-correlations}, with arbitrary unbounded values of $(\beta,N)$ and let $\qa(P)$ be the subset of $\qa$ for which the jammer has jamming energy per symbol at most $P$. We define further the distributions $q_c$, $q_{-1}$ and $q_1$ as
	\begin{align}
		q_c(u,v)&:=\tfrac{1}{2}\delta(u,v)\\
		q_{-1}&:=\delta_{-1}\otimes\pi\\
		q_1&:=\delta_1\otimes\pi,
	\end{align}
	the set $\Delta:=\mathrm{conv}(\{q_c,q_{-1},q_1\})$ and the set $\Delta_\delta:=\mathrm{conv}(\{q_c,\tilde q_{-1},\tilde q_1\})$ with $\tilde q_i:=(1-\delta)q_i+\delta q_c$. The energy limit $P$ of the jammer bounds the displacement $\beta$ and effect of the jammer on the covariance matrix via $N$, so that for every $P>0$ and all $(\beta,N)$ satisfying $|\beta|^2+N\leq P$ there is a $\delta(E,P)>0$ such that $\qa(P)\subset\Delta_{\delta(E,P)}$. 
	\begin{proof}
		We first show $\qa\subset\Delta$. Each member $q$ of $\qa$ is given by the formulas \eqref{psw}, \eqref{pnw} and \eqref{pse}. in order to clearly point out the step from a distribution on signals to one on bits, we write explicitly
		\begin{align}   
			q(-1,-1) &= \mathbb P(\Box_\swarrow)=:q\\
			q(-1,1) &= \mathbb P(\Box_\nwarrow)=s-q\\
			q(1,-1) &= \mathbb P(\Box_\searrow)=\tfrac{1}{2}-q\\
			q(1,1) &= \mathbb P(\Box_\nearrow)=\tfrac{1}{2}-s+q,
		\end{align}
		where $s=\phi(-\tfrac{\beta}{\sqrt{a}})$.
		We then embed $\mathcal P(\{-1,1\})$ and thereby $\Delta$ into $\mathbb R^3$ by mapping any distribution $r$ to $r(-1,-1)e_0+r(-1,1)e_1+r(1,-1)e_2$. Then 
		\begin{align}
			q_c   &= (\tfrac{1}{2},0,0)^T\\
			q_{-1}&= (\tfrac{1}{2},\tfrac{1}{2},0)^T\\
			q_{1} &= (0,0,\tfrac{1}{2})^T.
		\end{align}
		A normal to $\Delta$ is given by $\mathbf n=(1,1,0)^T$: 
		\begin{align}
			\langle q_c - q_c,\mathbf n\rangle&=0\\
			\langle q_c - q_{-1}, \mathbf n\rangle&=\langle (0,-\tfrac{1}{2},0)^T, (1,0,1)^T\rangle = 0\\
			\langle q_c - q_1,\mathbf n\rangle&=\langle(\tfrac{1}{2},0,-\tfrac{1}{2}),(1,0,1)^T\rangle =0.
		\end{align}
		Each distribution $q$ is a subset of $\qa$. To see this, we take a distribution $q$, and compute
		\begin{align}
			\langle q-q_c,\mathbf n\rangle = (q-\tfrac{1}{2}) + (\tfrac{1}{2}-q) = 0.
		\end{align}
		Therefore, $\qa\subset\Delta$. To derive a more explicit value $\delta(E,P)$ we invoke Lemma \ref{lem:lipschitz} again: For every product distribution $v\otimes w\in\mathcal P(\{-1,1\})$ we have
		\begin{align}
			\|q-v\otimes w\|_1
				&\geq|\det(q)|\\
				&= |q\cdot (\tfrac{1}{2}-s+q) - (s-q)\cdot(\tfrac{1}{2}-q)|\\
				&=|\tfrac{1}{2}q-qs+q^2-\tfrac{1}{2}s+sq+\tfrac{1}{2}q-q^2|\\
				&=|q-\tfrac{1}{2}s|\\
				&=|\phi_2(-\tfrac{\beta}{\sqrt{A+B}},0,\rho)-\phi_2(-\tfrac{\beta}{\sqrt{a}},0,0)|\\ 	
				&\geq\min_{\mathbf{j}\in\mathcal J}|\rho|\tfrac{1}{2\pi}\exp(-\tfrac{|\beta|^2}{2(A+B)(1-\rho^2)})
		\end{align}
		by Lemma \ref{lem:placket-bound} where according to \eqref{def:rho} the value $\rho$ is a function of the jamming signal $\mathbf j$, as are $A$ and $B$. We proceed with a crude lower bound as follows: For all $\mathbf j=(\beta,N)\in\mathcal J$ we have the following three inequalities:
		\begin{align}
			|\rho|&=\frac{|\sinh(2r)|}{\sqrt{\tfrac{1}{2}\cosh(2r)(\tfrac{1}{2}N+\tfrac{1}{4}\cosh(2r)+\tfrac{1}{4})}}\\
			&\geq \frac{|\sinh(2r)|}{\sqrt{\cosh(2r)(P+\cosh(2r)+1)}}\\
			A+B&=\tfrac{1}{2}N+\tfrac{1}{4}(\cosh(2r)+1)\\
				&\leq \tfrac{1}{2}P+\tfrac{1}{4}(\cosh(2r)+1)\\
			|\beta|^2&\leq P.
		\end{align}
		By noting that $\cosh(2r)=1+2E$ as well as (since the equality $\cosh(2r)^2=1-\sinh(2r)^2$ holds)  also $|\sinh(2r)|=2\sqrt{E(E+2)}$, we have shown the existence of $\delta(E,P)$ such that 
		\begin{align}
			\|q-v\otimes w\|_1 \geq\delta(E,P)
		\end{align}
		for all product distributions $v\otimes w$ and every choice of the jammer. In the representation 
		\begin{align}
			q = \lambda_c q_c + \lambda_{-1}q_{-1} + \lambda_1 q_1
		\end{align}
		this implies that for all $\mathbf j\in\mathcal J$ we have
		\begin{align}\label{eqn:delta(E,P)}
			\lambda_c\geq\delta(E,P).
		\end{align}
		
	\end{proof}
	\paragraph{Effective Channel}\label{para:effectiveChannel}
	Consider any channel modeled by a conditional probability distribution $w(y|x)$ where $x,y\in\{-1,1\}$. Assuming the sender and receiver share perfectly correlated uniformly random bits, they may transform $w$ to $w_c$ as follows:
	\begin{align}
		w_c(y|x):=\tfrac{1}{2}\sum_u w(y\oplus u|x\oplus u). 
	\end{align}
	In this case, we have 
	\begin{align}
		w_c(-1|-1)&=\tfrac{1}{2}\sum_u w(-1\oplus u|-1\oplus u)\\
		&= \tfrac{1}{2}\sum_u w(1\oplus u|1\oplus u)\\
		&=w_c(1|1)
	\end{align}
	and therefore $w_c$ is a \gls{bsc}. If instead the sender and the receiver use any uncorrelated distributions $v\otimes \pi$ with the same modulo two addition, then we get for every $x,y$
	\begin{align}
		w_u(y|x)&=\sum_{x'}v(x')\sum_{y'}\tfrac{1}{2}w(y\oplus y'|x\oplus x')\\
			&=w_u(y\oplus1|x).
	\end{align}
	Thus $w_u$ is a \gls{bsc} again, but this time one with crossover probability $1/2$. 
	The code for showing $C_Q>0$ is now straightforward: During $2n$ transmissions, the sender and receiver send a \gls{tmsv} state in the first $n$ transmissions, leaving them with a series $q_1,\ldots,q_n\in\qa$ of joint distributions. During the next $n$ transmissions, they transmit using the \gls{bpsk} alphabet $\{|-\sqrt{E}\rangle,|\sqrt{E}\rangle\}$ as in the proof of property \eqref{barC>c} of Theorem \ref{thm:main}, leading to a series $w_i(y|x)$ of channels. Afterwards, again as in the proof of property \eqref{barC>c} of Theorem \ref{thm:main}, they utilize their distributions $q_i$ to transform each $w_i$ into a \gls{bsc}:
	\begin{align}
		\tilde w_i(y|x)
			&:=\textstyle\sum_{x',y'}w_i(y\oplus y'|x\oplus x')q(x',y')\\
			&=\lambda_{c,i} \textstyle\sum_{x'}\tfrac{w_i(y\oplus y'|x\oplus x')}{2}+\textstyle\sum_{y'}[\lambda_{-1,i}\tfrac{w_i(y\oplus y'|x\oplus -1)}{2}+\lambda_{1,i}\tfrac{w_i(y\oplus y'|x\oplus 1)}{2}]\\
			&=\lambda_{c,i}\cdot b_{p_i}(y|x) + (1-\lambda_{c,i})\cdot b_{1/2}(y|x).
	\end{align}
	As follows from the proof of property \eqref{barC>c} of Theorem \ref{thm:main} it holds $p_i\leq\epsilon(E,P)$ uniformly over $i=1,\ldots,n$ for some $0\leq \epsilon(E,P)<\tfrac{1}{2}$. Further we have for every two \glspl{bsc}, $\lambda\in[0,1]$ and $x,y$ the property 
	\begin{align}
		\lambda \cdot b_{p_1}(y|x) + (1-\lambda)\cdot b_{p_2}(y|x) = b_{\lambda p_1+(1-\lambda)p_2}(y|x).
	\end{align}
	Therefore each effective channel is a \gls{bsc} with crossover probability 
	\begin{align}
		\lambda_{c,i}\cdot p_i + (1-\lambda_{c,i})\cdot \tfrac{1}{2}\leq \lambda_{c,i}\cdot \epsilon(E,P) + (1-\lambda_{c,i})\cdot \tfrac{1}{2}<\tfrac{1}{2},
	\end{align}
	where the last inequality follows from the uniform bound $\lambda_{c,i}>\delta(E,P)$ of \eqref{eqn:delta(E,P)}.

	As before in the proof of property \ref{c>0} an application of \cite[Theorem 1]{ahlswedeWolfowitz} shows that this implies $C_Q>0$. 
\end{proof}

\section{Appendix}
	The following two lemmas are important in our derivation:
	\begin{lemma}\label{lem:lipschitz}
		let $p,q\in\mathcal P(\{-1,1\}^2)$ with $p$ being a product distribution which factorizes as $p(i,j)=v(i)\cdot w(j)$ for some $v,w\in\mathcal P(\{-1,1\})$. Then
		\begin{align}
			\|p-q\|_1\geq|q(1,1)q(-1,-1)-q(1,-1)q(-1,1)|.
		\end{align}
	\end{lemma}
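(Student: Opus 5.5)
The plan is to write the determinant $\det(q):=q(1,1)q(-1,-1)-q(1,-1)q(-1,1)$ as a function that vanishes on product distributions and has small Lipschitz constant in $\ell_1$, and then to compare $q$ with the given product $p=v\otimes w$. Since $p(i,j)=v(i)w(j)$, we have $p(1,1)p(-1,-1)-p(1,-1)p(-1,1)=v(1)w(1)v(-1)w(-1)-v(1)w(-1)v(-1)w(1)=0$, so $|\det(q)|=|\det(q)-\det(p)|$. It therefore suffices to show $|\det(q)-\det(p)|\le\|p-q\|_1$ for arbitrary probability distributions $p,q$ on $\{-1,1\}^2$.

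To get this bound, I will use the telescoping identity $ad-a'd'=(a-a')d+a'(d-d')$ on each of the two products. Write $a=q(1,1)$, $d=q(-1,-1)$, $b=q(1,-1)$, $c=q(-1,1)$, and use primes for the corresponding entries of $p$. Then
\begin{align}
	\det(q)-\det(p)=(a-a')d+a'(d-d')-(b-b')c-b'(c-c').
\end{align}
All entries lie in $[0,1]$. The triangle inequality therefore gives $|\det(q)-\det(p)|\le|a-a'|+|d-d'|+|b-b'|+|c-c'|=\|q-p\|_1$, which is the claim.

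The only step needing care is the constant. The crude bound from the telescoping above already yields Lipschitz constant $1$ with respect to the full $\ell_1$ norm, and that is exactly what the statement requires, so no sharper estimate is needed. I will also check the sign convention: the paper's $\|\cdot\|_1$ is the plain sum of absolute differences over the four outcomes, and a factor $\tfrac12$ would only strengthen the inequality anyway. Nothing from earlier in the paper is needed; the lemma is purely elementary. I expect no real obstacle beyond noting that the determinant of a product distribution vanishes.
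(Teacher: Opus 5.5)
Your proof is correct and is essentially the paper's argument: the paper also splits $\det(p)-\det(q)$ into four terms of the form (entry in $[0,1]$)$\,\times\,$(entry difference), bounds the sum by $\|p-q\|_1$, and concludes from $\det(v\otimes w)=0$. Your version keeps the absolute value throughout, which is slightly cleaner than the paper's one-sided bound. One small correction to your aside: a $\tfrac12$-normalized norm would make the claim stronger, not weaker, and your constant-$1$ telescoping bound would not cover it---but since the paper's $\|\cdot\|_1$ is the plain sum, your proof stands as written.
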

	\begin{proof}
		Let $\mathrm{det}(p):=p(1,1)p(-1,-1)-p(1,-1)p(-1,1)$ and $\mathrm{det}(q)$ in the same way. Then 
		\begin{align}
			\mathrm{det}(p)-\mathrm{det}(q) &= p(1,1)p(-1,-1)-p(1,-1)p(-1,1)\\
			&\qquad - q(1,1)q(-1,-1)-q(1,-1)q(-1,1)\\
			&=p(1,1)p(-1,-1)-p(1,-1)p(-1,1)  - q(1,1)q(-1,-1)\\
			&\qquad-q(1,-1)q(-1,1) + p(-1,1)q(1,-1) - p(-1,1)q(1,-1)\\
			&=p(-1,-1)[p(1,1)-q(1,1)] + q(1,1)[p(-1,-1)-q(-1,-1)]\\
			&\qquad + p(-1,1)[q(1,-1)-p(1,-1)] + q(1,-1)[q(-1,1)-p(-1,1)]\\
			&\leq \sum_{i,j}|p(i,j)-q(i,j)|\\
			&=\|p-q\|_1
		\end{align}
		Since $\mathrm{det}(v\otimes w)=0$, the lemma is proven.
	\end{proof}
	The second argument we need is the following Lemma
	\begin{lemma}\label{lem:placket-bound}
		Let $\phi_2$ be the bi-variate normal cumulative distribution function \eqref{def:bivariate-normal-cumulative-distribution-function} and $\phi$ the standard normal cumulative distribution function \eqref{def:phi}. Then for all $t\in\mathbb R$ and $\rho\in[-1,1]$ we have
		\begin{align}
			|\phi_2(t,0,\rho)-\tfrac{1}{2}\phi(t)| \geq |\rho|\tfrac{1}{2\pi}\exp(-\tfrac{t^2}{2(1-\rho^2)}).
		\end{align}
	\end{lemma}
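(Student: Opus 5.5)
The plan is to use Plackett's identity for the bivariate normal distribution function. It says that the derivative of $\Phi_2(h,k,\rho)$ with respect to the correlation parameter equals the bivariate normal density evaluated at the corner point:
\begin{align}
	\frac{\partial}{\partial\rho}\Phi_2(h,k,\rho)=\frac{1}{2\pi\sqrt{1-\rho^2}}\exp\left(-\frac{h^2-2\rho hk+k^2}{2(1-\rho^2)}\right),\qquad |\rho|<1.
\end{align}
I would first recall or verify this identity for $|\rho|<1$. The quickest route is through the characteristic function: the Gaussian density $f_\rho$ satisfies the heat-type equation $\partial_\rho f_\rho=\partial_x\partial_y f_\rho$. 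Integrating this over $(-\infty,h]\times(-\infty,k]$ gives the identity directly, and differentiating under the integral is justified by the Gaussian tails. Since the integrand in \eqref{def:bivariate-normal-cumulative-distribution-function} is symmetric in $x,y$, the order of the arguments $h,k$ does not matter.

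Next I would fix the anchor point $\rho=0$. There the two standard normal coordinates are independent, so $\Phi_2(t,0,0)=\phi(t)\phi(0)=\tfrac12\phi(t)$. The fundamental theorem of calculus with $k=0$ then gives
\begin{align}
	\Phi_2(t,0,\rho)-\tfrac12\phi(t)=\int_0^\rho\frac{1}{2\pi\sqrt{1-s^2}}\exp\left(-\frac{t^2}{2(1-s^2)}\right)ds .
\end{align}
The integrand is strictly positive, so the absolute value of the right-hand side equals the integral over the interval between $0$ and $\rho$ with the orientation removed.

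The lower bound then follows by bounding the integrand pointwise. For $|s|\le|\rho|<1$ we have $1/\sqrt{1-s^2}\ge1$. We also have $1-s^2\ge1-\rho^2$, hence $t^2/(2(1-s^2))\le t^2/(2(1-\rho^2))$, and so the exponential factor is at least $\exp(-t^2/(2(1-\rho^2)))$. Integrating this constant lower bound over an interval of length $|\rho|$ yields exactly $|\rho|\tfrac{1}{2\pi}\exp(-\tfrac{t^2}{2(1-\rho^2)})$, which is the claim for $|\rho|<1$.

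The only delicate step is the boundary case $|\rho|=1$. There the right-hand side must be read as a limit: it is $0$ for $t\neq0$, so the inequality is trivial, and it is $\tfrac{1}{2\pi}$ for $t=0$. In the latter case I would argue by continuity of $\rho\mapsto\Phi_2(0,0,\rho)$ (convergence in distribution of the Gaussian to the degenerate law as $\rho\to\pm1$). Alternatively one can compute directly: at $t=0$ the difference is $\tfrac{1}{2\pi}\arcsin\rho$, whose modulus $\tfrac14$ at $\rho=\pm1$ exceeds $\tfrac{1}{2\pi}$. Beyond this bookkeeping I expect no real obstacle. The main work is justifying Plackett's identity, which I would cite as standard if a full derivation is not required.
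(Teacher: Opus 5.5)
Your proposal is correct and follows essentially the same route as the paper: Plackett's formula for $\partial_\rho\phi_2(t,0,\rho)$, anchoring at $\rho=0$ where $\phi_2(t,0,0)=\tfrac12\phi(t)$, and then bounding the integrand from below by using $1/\sqrt{1-s^2}\ge 1$ and replacing $s$ by $\rho$ in the exponent. Your heat-equation justification of Plackett's identity and your separate treatment of $|\rho|=1$ go beyond the paper, which cites the identity and implicitly assumes $|\rho|<1$.
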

	\begin{proof}[Proof of Lemma \ref{lem:placket-bound}]
		We start by noting that according to Plackett's formula \cite{plackett1954} (see \cite[Eq. (10)]{Komelj2023} for the exact formulation as we use it here) we have
		\begin{align}
	\partial_\rho\phi_2(t,0,\rho)=\tfrac{1}{2\pi\sqrt{1-\rho^2}}\exp\left(-\frac{t^2}{2(1-\rho^2)}\right).
	\end{align}
	Therefore we have 
	\begin{align}
		\phi_2(t,0,\rho)-\tfrac{1}{2}\phi(t) &=  \phi_2(t,0,\rho)-\phi_2(t,0,0)\\
		&=\int_0^\rho\partial_\rho\phi_2(t,0,\rho)d\rho\\
		&=\int_0^\rho \tfrac{1}{2\pi\sqrt{1-\rho^2}}\exp\left(-\frac{t^2}{2(1-\rho^2)}\right) d\rho.
	\end{align}
	We can therefore derive the lower bound 
	\begin{align}
		|\phi_2(t,0,\rho)-\tfrac{1}{2}\phi(t)|&\geq|\rho|\min_{0\leq\nu\leq\rho}\tfrac{1}{2\pi\sqrt{1-\nu^2}}\exp(-\tfrac{t^2}{2(1-\nu^2)})\\
		&\geq|\rho|\min_{0\leq\nu\leq\rho}\tfrac{1}{2\pi}\exp(-\tfrac{t^2}{2(1-\rho^2)})\\
		&\geq|\rho|\tfrac{1}{2\pi}\exp(-\tfrac{t^2}{2(1-\rho^2)}).
	\end{align}
	\end{proof}
\section*{Acknowledgements} This work was supported in part by the DFG Emmy-Noether Program under Grant 1129/2-1, 
in part by the Federal Ministry of Research, Technology and Space under grant numbers 16KISQ039, 16KISR026, 16KIS1598K, 16KISQ093, 16KISQ077, and 16KIS2604 and in part by the Federal Ministry of Research, Technology and Space of Germany through the Programme of "Souverän. Digital. Vernetzt." Joint Project 6G-life, project identification number 16KISK002.
The generous support of the state of Bavaria via the 6GQT project is greatly appreciated. This research is part of the Munich Quantum Valley, which is supported by the Bavarian state government with funds from the Hightech Agenda Bayern Plus.
\bibliographystyle{plain}
\bibliography{bib,bibcopy}

@article{Boche13,
    author = {Boche, H. and Nötzel, J.},
    title = {Arbitrarily small amounts of correlation for arbitrarily varying quantum channels},
    journal = {Journal of Mathematical Physics},
    volume = {54},
    number = {11},
    pages = {112202},
    year = {2013},
    month = {11},
    issn = {0022-2488},
    doi = {10.1063/1.4825159},
    url = {https://doi.org/10.1063/1.4825159},
    eprint = {https://pubs.aip.org/aip/jmp/article-pdf/doi/10.1063/1.4825159/15752911/112202\_1\_online.pdf},
}

@article{plackett1954,
  title={A reduction formula for normal multivariate integrals},
  author={Plackett, Robin L},
  journal={Biometrika},
  volume={41},
  number={3/4},
  pages={351--360},
  year={1954},
  publisher={JSTOR}
}

@article{minglaiSuperActivation,
    author = {Boche, Holger and Cai, Minglai and Deppe, Christian and Nötzel, Janis},
    title = {Classical-quantum arbitrarily varying wiretap channel: Secret message transmission under jamming attacks},
    journal = {Journal of Mathematical Physics},
    volume = {58},
    number = {10},
    pages = {102203},
    year = {2017},
    month = {10},
    issn = {0022-2488},
    doi = {10.1063/1.5005947},
    url = {https://doi.org/10.1063/1.5005947},
    eprint = {https://pubs.aip.org/aip/jmp/article-pdf/doi/10.1063/1.5005947/15698004/102203_1_online.pdf},
}

@article{minglaiDiscontinuity,
   author = {Boche, Holger and Cai, Minglai and Deppe, Christian and Nötzel, Janis},
   title={Classical-quantum arbitrarily varying wiretap channel: common randomness assisted code and continuity},
   volume={16},
   ISSN={1573-1332},
   url={http://dx.doi.org/10.1007/s11128-016-1473-y},
   DOI={10.1007/s11128-016-1473-y},
   number={1},
   journal={Quantum Information Processing},
   publisher={Springer Science and Business Media LLC},
   year={2016},
   month={12} }

@article{noetzelDiscontinuity,
    author = {Boche, H. and Nötzel, J.},
    title = {Positivity, discontinuity, finite resources, and nonzero error for arbitrarily varying quantum channels},
    journal = {Journal of Mathematical Physics},
    volume = {55},
    number = {12},
    pages = {122201},
    year = {2014},
    month = {12},
    issn = {0022-2488},
    doi = {10.1063/1.4902930},
    url = {https://doi.org/10.1063/1.4902930},
    eprint = {https://pubs.aip.org/aip/jmp/article-pdf/doi/10.1063/1.4902930/15974951/122201_1_online.pdf},
}

@INPROCEEDINGS{bocheUncomputable,
  author={Boche, Holger and Schaefer, Rafael F and Vincent Poor, H.},
  booktitle={ICASSP 2019 - 2019 IEEE International Conference on Acoustics, Speech and Signal Processing (ICASSP)}, 
  title={Detectability of Denial-of-service Attacks on Communication Systems}, 
  year={2019},
  volume={},
  number={},
  pages={2532-2536},
  doi={10.1109/ICASSP.2019.8683553}}

@misc{detectingsymmetrizability,
      title={Detecting Symmetrizability in Physical Systems}, 
      author={Florian Seitz and Janis Nötzel},
      year={2025},
      eprint={2512.02869},
      archivePrefix={arXiv},
      primaryClass={quant-ph},
      url={https://arxiv.org/abs/2512.02869}, 
}

@ARTICLE{ahlswede-cai-correlated,
  author={Ahlswede, R. and Ning Cai},
  journal={IEEE Transactions on Information Theory}, 
  title={Correlated sources help transmission over an arbitrarily varying channel}, 
  year={1997},
  volume={43},
  number={4},
  pages={1254-1255},
  doi={10.1109/18.605589}}

@INPROCEEDINGS{jaggiAVC,
  author={Bhattacharya, Sagnik and Budkuley, Amitalok J. and Jaggi, Sidharth},
  booktitle={2019 IEEE International Symposium on Information Theory (ISIT)}, 
  title={Shared Randomness in Arbitrarily Varying Channels}, 
  year={2019},
  volume={},
  number={},
  pages={627-631},
  doi={10.1109/ISIT.2019.8849801}}

@ARTICLE{peregAVC,
  author={Pereg, Uzi and Steinberg, Yossef},
  journal={IEEE Transactions on Information Theory}, 
  title={The Arbitrarily Varying Channel With Colored Gaussian Noise}, 
  year={2021},
  volume={67},
  number={6},
  pages={3781-3817},
  doi={10.1109/TIT.2021.3063905}}

@article{ahlswedeWolfowitz, 
    author = {Ahlswede, R. and Wolfowitz, J.},
    year = {1970},
    title = {The capacity of a channel with arbitrarily varying channel probability functions and binary output alphabet},
    journal = {Zeitschrift für Wahrscheinlichkeitstheorie und Verwandte Gebiete},
    pages = {186 --194},
    volume = {15},
    issue = {3},
    sn = {1432-2064},
    url = {https://doi.org/10.1007/BF00534915},
    doi = {10.1007/BF00534915},
}

@INPROCEEDINGS{guhaBeamsplitter-1,
  author={Guha, Saikat and Shapiro, Jeffrey H.},
  booktitle={2007 IEEE International Symposium on Information Theory}, 
  title={Classical Information Capacity of the Bosonic Broadcast Channel}, 
  year={2007},
  volume={},
  number={},
  pages={1896-1900},
  doi={10.1109/ISIT.2007.4557498}}

@article{guhaBeamsplitter-2,
  title = {Classical capacity of bosonic broadcast communication and a minimum output entropy conjecture},
  author = {Guha, Saikat and Shapiro, Jeffrey H. and Erkmen, Baris I.},
  journal = {Phys. Rev. A},
  volume = {76},
  issue = {3},
  pages = {032303},
  numpages = {12},
  year = {2007},
  month = {Sep},
  publisher = {American Physical Society},
  doi = {10.1103/PhysRevA.76.032303},
  url = {https://link.aps.org/doi/10.1103/PhysRevA.76.032303}
}

@article{Komelj2023,
	title={The Bivariate Normal Integral via Owen's Function as a Modified Euler's Arctangent Series},
	volume={13},
	ISSN={2161-1211},
	url={http://dx.doi.org/10.4236/ajcm.2023.134026},
	DOI={10.4236/ajcm.2023.134026},
	number={04},
	journal={American Journal of Computational Mathematics},
	publisher={Scientific Research Publishing, Inc.},
	author={Komelj, Janez},
	year={2023},
	pages={476–504} 
}

@inproceedings{fullyQuantumAVC,
	doi = {10.1109/isit.2018.8437610},
	url = {https://doi.org/10.1109},
	year = 2018,
	month = {jun},
	publisher = {{IEEE}},	
	author = {Holger Boche and Christian Deppe and Janis N\"otzel and Andreas Winter},
	title = {Fully Quantum Arbitrarily Varying Channels: Random Coding Capacity and Capacity Dichotomy},
	booktitle = {2018 {IEEE} International Symposium on Information Theory ({ISIT})}
}

@book{serafiniBOOK,
author = {Serafini, Alessio},
publisher = {CRC Press},
title = {{Quantum Continuous Variables: A Primer of Theoretical Methods}},
year = {2017}
}

@ARTICLE{csiszarNarayan,  
    author={Csiszar, I. and Narayan, P.},  
    journal={IEEE Transactions on Information Theory},   
    title={Capacity of the Gaussian arbitrarily varying channel},   
    year={1991},  
    volume={37},  
    number={1},  
    pages={18-26},  
    doi={10.1109/18.61125}
}

@article{bbt,
    title={The Capacities of Certain Channel Classes Under Random Coding},
    author={D. Blackwell and L. Breiman and A. J. Thomasian},
    journal={The Annals of Mathematical Statistics},
    pages={558-567},
    volume={31},
    number={3},
    year={1960},
}

@article{elimination,
    author={Ahlswede, R.},
    title={Elimination of correlation in random codes for arbitrarily varying channels},
    journal={Z. Wahrscheinlichkeitstheorie verw Gebiete},
    volume={44}, 
    pages={159 - 175},
    year={1978},
    doi={10.1007/BF00533053},
}

@article{quantumAVC,
    author = {Ahlswede, R. and Bjelakovic, I. and Boche, H. and N\"otzel, J.},
    title = {Quantum Capacity under Adversarial Quantum Noise: Arbitrarily Varying Quantum Channels},
    journal = {Communications in Mathematical Physics},
    volume = {317},
    pages = {103 - 156},
    year = {2013},
    url = {https://doi.org/10.1007/s00220-012-1613-x},
}

@ARTICLE{csiszarNarayanPositivity,
  author={Csiszar, I. and Narayan, P.},
  journal={IEEE Transactions on Information Theory}, 
  title={The capacity of the arbitrarily varying channel revisited: positivity, constraints}, 
  year={1988},
  volume={34},
  number={2},
  pages={181-193},
  doi={10.1109/18.2627}}

@ARTICLE{ahlswede-blinovsky,
  author={Ahlswede, Rudolf and Blinovsky, Vladimir},
  journal={IEEE Transactions on Information Theory}, 
  title={Classical Capacity of Classical-Quantum Arbitrarily Varying Channels}, 
  year={2007},
  volume={53},
  number={2},
  pages={526-533},
  doi={10.1109/TIT.2006.889004}}

@INPROCEEDINGS{noetzel-isit2024,
  author={Nötzel, Janis},
  booktitle={2024 IEEE International Symposium on Information Theory (ISIT)}, 
  title={Data Transmission over a Bosonic Channel under Classical Noise}, 
  year={2024},
  volume={},
  number={},
  pages={1227-1232},
  doi={10.1109/ISIT57864.2024.10619133}}

@ARTICLE{ericsonAVC,
  author={Ericson, T.},
  journal={IEEE Transactions on Information Theory}, 
  title={Exponential error bounds for random codes in the arbitrarily varying channel}, 
  year={1985},
  volume={31},
  number={1},
  pages={42-48},
  doi={10.1109/TIT.1985.1056995}}

\end{document}